\documentclass[twoside]{article}

\usepackage[utf8]{inputenc}
\usepackage[english]{babel}
\usepackage{amsthm}
\usepackage{amsmath}
\usepackage{amssymb}

\theoremstyle{plain}
\newtheorem{theorem}{Theorem}

\theoremstyle{definition}

\date{}
\begin{document}

\title{On a necessary condition for the matching cryptosystem stability}
\author{Aleksey I. Bolotnikov and Anwar A. Irmatov
}
\maketitle
\begin{abstract}
    The article contains a description of a possible attack on a matching cryptosystem and a defense with limited noise. A public key of a matching cryposystem consists of a graph and a weight vector-function on the edges of the graph with values from a finite field, where a private key contains another weight function, for which the corresponding alternating weighted path problem can be solved in polynomial time. There is a specific family of these secret weight functions that is considered in this article, for which some of the coordinates of its vector values are described as limited noise. We suggest a necessary condition for the matching cryptosystem stability in terms of dimensions of spans of weight vectors that correspond to specific sets of edges of the graph from the public key.
\end{abstract}
The representation of the maximum matching problem in a graph as a linear programming problem on a matching
polyhedron, introduced by J. Edmonds in 1965 [1], is well known. Using the one-to-one correspondence between the matchings
of a graph $G$ and the vertices of the matching polyhedron $P(G)$, one can define the weight of a vertex of $P(G)$ as the weight of
the corresponding matching from $G$. Then a weight function on the edges of the graph with values in a finite-dimensional space over a finite field $\mathbb{F}_p$, where $p$ is a prime, can be characterized as "proper" or "improper" using the following property of the matching
polyhedron. Two vertices of $P(G)$ are connected by an edge if and only if the symmetric difference of the matchings corresponding
to these vertices consists of exactly one simple path of arbitrary length or one simple cycle of even length. Then we call the original weight function $W$
on the edges of $G$ "proper" if, for the constructed weight function on the vertices of $P(G)$, it is true that any two vertices of $P(G)$
connected by an edge have different weights. We call $W$ "improper" in the opposite case, that is, when there are at least two adjacent vertices
in $P(G)$ with the same weight. In this case, a simple path or a simple cycle of even length  $(e_{i_1}, ..., e_{i_d})$, such that  $W(e_{i_1}) - W(e_{i_2}) +W(e_{i_3}) + ... + (-1)^{d-1}W(e_{i_d}) = 0 \in \mathbb{F}^{k}_{p}$,
will be the symmetric difference of two matchings corresponding to vertices of $P(G)$ with the same weight. This allows us to
formulate the alternating weighted path problem.

{\bfseries Definition 1.} The alternating weighted path problem: given a graph $G$, a prime number $p$, a positive integer number $k$, a weight function $W : E \rightarrow \mathbb{F}^{k}_{p}$ and a vector $(a_1, ..., a_k) \in \mathbb{F}^{k}_{p}$, the problem is to determine, if there is a sequence of edges $(e_{i_1} , ..., e_{i_d})$ in graph $G$ that forms either a simple path of arbitrary length or a simple cycle of even length, such that $W(e_{i_1}) - W(e_{i_2}) + ... + (-1)^{d+1}W(e_{i_d}) = (a_1, ..., a_k).$

In [2], the alternating weighted path problem was proven to be NP-complete. The authors of [2] proposed a new knapsack-type public-key encryption system in which the NP-completeness of the alternating weighted path problem is
used to prove its security. 

{\bfseries Definition 2.} A matching cryptosystem is an encryption system with the following scheme:

\begin{itemize}
    \item {\bfseries Public Key}: Graph $G(V,E)$, prime number $p$, natural number $k$, and weight function $W:E\rightarrow \mathbb{F}_{p}^{k}$.
    \item {\bfseries Private Key}: A weight function $\hat W$, for which the corresponding alternating weighted path problem can be solved in polynomial time, and an invertible linear operator $M: \mathbb{F}_{p}^{k} \rightarrow \mathbb{F}_{p}^{k}$ that transforms weights on edges, such that $W$ is transformed into $\hat W$. If any additional objects were used in the construction of $\hat W$, they are considered a part of the private key as well.
    \item {\bfseries Message Space}: Simple paths and simple even cycles of $G$.
    \item {\bfseries Encryption}: For a path or an even cycle $(e_{i_1},...,e_{i_d})$, compute $(a_1,...,a_k) = W(e_{i_1})-W(e_{i_2})+...+(-1)^{d+1}W(e_{i_d})$.
    \item {\bfseries Decryption}: Apply operator $M$ to the received message
    $A = (a_1,...,a_k)$ and solve the easy alternating weighted path problem
    with weight function $\hat W(e) = M(W(e))$ and the alternating sum of weights $B = M(A)$.
\end{itemize}

Consider the following family of quickly solvable alternating weighted path problems. Let $G$ be a complete graph $K_m$. Let $p$ be a big enough prime number, $p > 2 \cdot 3^{m-2}$. Let the dimension of the space of weights be 
$k = m$.

Function $\hat W$ is defined as follows. We fix a numbering of vertices of the graph $G: V = \{v_1, ..., v_m\}$. For an arbitrary pair of vertices $v_i, v_j , i < j$, let the weight of the edge  $(v_i, v_j)$ be denoted as $\hat W_{i,j} = (\hat w_{i,j,1}, ...,\hat w_{i,j,k}).$ For each $1 \leq i \leq m - 1$ the sequence $C_i = (\hat w_{i,i+1,i},\hat w_{i,i+2,i}, ...,\hat w_{i,m,i})$ is a rapidly growing sequence:  $3\cdot \hat w_{i,j,i} \leq \hat w_{i,j+1,i}.$ In total, $m(m - 1)/2$ of  elements $\hat w_{i,j,t}$ are used as elements of rapidly growing sequences. Other elements $\hat w_{i,j,t}$ are used as noise to improve the system's resistance to attacks. Since there is $m^{2}(m-1)/2,$ of elements $\hat w_{i,j,t}$ in general and $m(m-1)/2$ of them used in rapidly growing sequences, $m(m - 1)^{2}/2$ of elements 
$\hat w_{i,j,t}$ are used as noise.

An algorithm that solves problems from this family, when  $C_i = (1, 3,..., 3^{m-i-1})$, $\hat w_{i,j,t}= 0, t \neq i, t \neq j$, and  $\hat w_{i,j,t}= 1, t = j$,  in polynomial time, is described in [2]. The algorithm works as follows. If a path of arbitrary length or a cycle of even length $(e_{i_1} , ..., e_{i_d})$, that satisfies
$\hat W(e_{i_1}) - \hat W(e_{i_2}) + ... + (-1)^{d+1}\hat W(e_{i_d}) = (a_1, ..., a_k),$  passes through the first vertex,
then $a_1$ from $(a_1, ..., a_k)$ can take one of the following values:
\begin{itemize}
\item $3^i, 0 \leq i \leq m - 2,$
\item $p - 3^i, 0 \leq i \leq m - 2,$
\item $3^j - 3^i, 0 \leq i < j \leq m - 2,$
\item  $p - (3^j - 3^i), 0 \leq i < j \leq m - 2.$
\end{itemize}
There are $m(m-1)$ of these values in total. And if $(e_{i_1} , ..., e_{i_d})$
does not pass through the first vertex, $a_1 = 0.$

By determining, if $a_1$ is equal to any of those $m(m - 1) + 1$ values, we can discover edges on $(e_{i_1} , ..., e_{i_d})$ that are adjacent to the first vertex. After weights of discovered edges are subtracted from $(a_1, ..., a_k)$ with correct signs, the procedure is repeated for the second vertex, then for the third vertex, etc.

Elements $\hat w_{i,j,t}$ of the weight function can be divided into three categories. Elements  $\hat w_{i,j,t}$ with $i = t$ are used in rapidly increasing sequences. Elements $\hat w_{i,j,t}$ with $ i < t$
have the following property: values of these elements do not affect the 
functioning of the algorithm described above. These elements will be referred to as arbitrary noise. Elements  $\hat w_{i,j,t}$ with $i > t$, will be reffered to as limited noise, since aribtrary values of these elements can affect the ability of the algorithm to find a distinct solution in polynomial time.

The following is an example of an alternating weighted path problem with a non-zero limited noise that can be solved in polynomial time. Instead of
 $C_j = (\hat w_{j,j+1,j} , ...,\hat w_{j,m,j}) = (1, 3, 3^2, ..., 3^{m-j-1})$, let $ C_j = (a_{j,1}, ..., a_{j,m-j}),$ where $\forall i \; 3\cdot a_{j,i-1} \leq a_{j,i}$ and $a_{j,1} \geq 3,$ and let $\hat w_{i,j,t} = 1, t < i$.
Such a modification makes it necessary for each $a_i$ to additionally check if $a_i$ is equal to any of $\pm(a_{j,i}-1)$ or $\pm (a_{j,t}-a_{j,i}) \pm 1$, and the total number of value comparisons stays polynomial.

The necessity of using non-zero limited noise is due to the following result.

 \begin{theorem} There is an attack that operates in time polynomial to the public key size,
to which all systems  with zero limited noise from the described family are vulnerable.
\end{theorem}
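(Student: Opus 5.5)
With zero limited noise, $\hat w_{i,j,t}=0$ for $t<i$, and the structure of $\hat W$ becomes triangular. The attack exploits this: it recovers, from the public key alone, a linear operator $M'$ that plays the role of $M$. The starting observation concerns the first coordinate of $\hat W$. We have $\hat w_{i,j,1}=0$ for all $i\ge 2$, because $t=1<i$. Hence the linear functional $\phi_1=e_1^{T}M$ vanishes on every public weight $W(e)$ with $e$ not incident to $v_1$. More generally, the $t$-th coordinate functional $\phi_t=e_t^{T}M$ vanishes on $W(e)$ for all edges $e=(v_i,v_j)$ with $i,j>t$ (since $t<i<j$). So the attack looks for a vertex ordering and linear functionals annihilating the weights of the edge sets $E_t=\{(v_i,v_j): i,j>t\}$, i.e. the edges of the induced subgraphs on all vertices except the first $t$.

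\textbf{Step 1 (finding the vertex order).} For each candidate vertex $v$, compute $U_v=\mathrm{span}\{W(e): e\not\ni v\}$ by Gaussian elimination over $\mathbb{F}_p$. For the true first vertex $v_1$, $M(U_{v_1})$ lies in $\{x: x_1=0\}$, so $\dim U_{v_1}\le m-1<k$. Generic vertices should instead have weights spanning all of $\mathbb{F}_p^m$. We pick any $v$ with $\dim U_v<m$ and then recurse: among the remaining vertices we find $v$ such that the span of the weights of edges avoiding all already chosen vertices has dimension $<m-t$. This costs $O(m^2)$ rank computations of $m\times O(m^2)$ matrices, which is polynomial.

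\textbf{Step 2 (the annihilator flag).} This gives a flag $U^{(1)}\supseteq U^{(2)}\supseteq\cdots$ with $\dim U^{(t)}\le m-t$. We choose functionals $\psi_t$ with $\psi_t$ vanishing on $U^{(t)}$ and linearly independent of the earlier ones, and set $M'=(\psi_1,\dots,\psi_m)$. Then $\tilde W=M'W$ satisfies $\tilde w_{i,j,t}=0$ for $t<i$, so it has the same zero-limited-noise triangular shape. Decryption of a ciphertext $A$ proceeds vertex by vertex, as in the described algorithm. The $t$-th coordinate of $M'A$ involves only edges incident to $v_1,\dots,v_t$. After peeling off the edges at $v_1,\dots,v_{t-1}$ already found, it is a signed sum of at most two values $\tilde w_{t,j,t}$ (a simple path or cycle uses at most two edges at $v_t$). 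The edges at $v_t$ are therefore identified by testing the $O(m^2)$ candidates $\pm\tilde w_{t,j,t}$ and $\pm(\tilde w_{t,j,t}-\tilde w_{t,l,t})$.

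\textbf{Main obstacle.} Uniqueness is not automatic: $\psi_t$ is only determined up to adding multiples of $\psi_1,\dots,\psi_{t-1}$ and scaling. So the recovered sequences $\tilde w_{t,\cdot,t}$ need not be superincreasing, and candidate matches at $v_t$ could collide. I would handle this by reducing the ambiguity. Because $\psi_t-\phi_t$ lies in the span of earlier functionals, the matches are distinct exactly when the true values $\hat w_{t,j,t}$ are distinct, which holds for the true $\phi_t$, since a rapidly growing sequence with $p>2\cdot3^{m-2}$ gives pairwise distinct differences. It remains to handle the dimension-deficiency degeneracies, where some $U_v$ is unexpectedly small. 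I expect the argument there to be that $\hat W$'s triangular structure forces $\dim U^{(t)}=m-t$ exactly. In that case each $\psi_t$ equals $\phi_t$ up to a scalar multiple plus earlier terms, and the scalar is absorbed because uniqueness of the candidate values is scale-invariant. Justifying this equality of dimensions, or else branching over the polynomially many possibilities, is the step I expect to require the most care.
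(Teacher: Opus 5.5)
\textbf{There is a genuine gap.} When the chosen vertices happen to follow the secret numbering and the annihilator grows by exactly one dimension at each step, your decoding argument is fine. But Steps 1 and 2 take \emph{any} vertex with $\dim U_v<m$ and \emph{any} functional annihilating $U^{(t)}$. Your distinctness claim, $\psi_t-\phi_t\in\mathrm{span}(\phi_1,\dots,\phi_{t-1})$, silently assumes both of those favourable conditions. The attacker cannot verify either one, and neither holds in general.

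Take the system of [2] itself, which has zero limited noise: $\hat W(v_i,v_j)=3^{j-i-1}e_i+e_j$.
\begin{itemize}
\item For every $s$, the $s$-th coordinate vanishes on all edges avoiding $v_s$, so \emph{every} vertex is dimension-deficient.
\item For $m\ge 4$ one checks that $\dim U_{v_m}=m-1$.
\item If Step 1 picks $v_m$ (the public numbering gives no reason not to), the only admissible $\psi_1$ is $c\,e_m^{T}M$. It equals $c$ on every edge incident to $v_m$.
\item The local contribution at $v_m$ is therefore $\pm c$ or $0$ whichever edges the message uses, and decoding breaks down.
\item Interior vertices fail similarly: $(v_1,v_2)$ and $(v_2,v_3)$ both have second coordinate $1$.
\end{itemize}
So you cannot assume that generic vertices span all of $\mathbb{F}_p^m$; the arbitrary noise is the key owner's choice. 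Your hoped-for equality $\dim U^{(t)}=m-t$ is also false. Setting all $\hat w_{i,j,m}=0$ (this is arbitrary noise) puts every public weight in a hyperplane. Then a nonzero functional killing \emph{all} weights is a legal but useless choice of $\psi_1$. Blind branching does not give a polynomial bound: there are $m!$ vertex orders, and exponentially many (in $\log p$) functionals in a multi-dimensional annihilator.

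\textbf{The missing idea.} The paper's proof replaces guessing with a \emph{verifiable} uniqueness test, carried out in the whole quotient $\mathbb{F}_p^k/\mathrm{span}F(v)$ rather than with one chosen functional.
\begin{itemize}
\item For each vertex $v$ of the current graph with $\dim\mathrm{span}F(v)<k$, keep $v$ only if the possible local contributions at $v$ are pairwise distinct modulo $\mathrm{span}F(v)$. These contributions are $0$, $\pm W(e)$ and $\pm(W(e_i)-W(e_j))$ for edges at $v$. The test costs polynomially many membership checks.
\item Every kept vertex is decoded correctly, because all message edges not at $v$ lie in $F(v)$ and vanish in the quotient.
\item The remaining vertex $u$ with smallest secret index $t$ is always kept. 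The functional $e_t^{T}M$ vanishes on $F(u)$, so it factors through the quotient. On edges at $u$ it takes the values $\hat w_{t,j,t}$. These already separate the candidates, by the rapidly growing property together with $p>2\max\hat w_{i,j,i}$; this is the role of the paper's estimates on $\alpha-\beta+\gamma$ and $\phi+\alpha-\beta-\gamma$.
\item Remove all kept vertices, subtract their recovered edges from $A$, and repeat. Each round deletes at least one vertex, so at most $m$ rounds suffice.
\end{itemize}
Neither $M$ nor the secret numbering ever has to be reconstructed.
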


\begin{proof}

Assume that we know the complete graph $G(V,E) = K_m$, the prime number $p > 2 \cdot max_{1 \leq i <j \leq m}(\hat w_{i,j,i})$, the function $W :E \rightarrow \mathbb{F}^{k}_{p}$ with the dimension of weight vectors $k = m$, and the transmitted message  $A_1 = (a_1,...,a_k)$. Assume that we do not know the numbering of vertices that was used in the definition of $\hat W$, and this numbering is not connected to the numbering of edges of $G$ that is used in the description of public key. Let $\hat W$ have zero limited noise.

The first iteration of the attack consists of the following steps:

\begin{enumerate}
\item Find all vertices $v$ with a property that the dimension of the span of the set  $F(v) = \{W(e)|e \in
E, e$ is not incident to  $v\}$ is less then $k$.
\item Keep those vertices $v$ that satisfy all the following conditions:

\begin{itemize}
\item for every edge $e$ that is incident to $v$, $W(e)$ does not belong to the span of $F(v)$;
\item  for every two distinct edges $e_i$ and $e_j$ that are incident to $v$, $W(e_i) -W(e_j)$ does not belong to the span of $F(v)$;
\item for every three distinct edges $e_i$, $e_j$ and $e_k$ that are incident to $v$, $W(e_i) - (W(e_j) - W(e_k))$ does not belong to the span of $F(v)$;
\item for every four distinct edges $e_i$, $e_j$,$e_k$ and $e_t$ that are incident to $v$, $(W(e_i) - W(e_j))  - (W(e_k) - W(e_t))$ does not belong to the span of $F(v)$.
\end{itemize}

\item Check for the transmitted message $A_1$ and each remaining vertex $v$, that
$A_1$ belongs to the span of $F(v)$. If it does not belong, then discover either an edge $e$ that is incident to $v$, such that $A_1 \pm W(e)$ belongs to the span of $F(v)$, or a couple of edges $e_i$ and $e_j$ that are incident to $v$, such that $A_1 \pm (W(e_i)-W(e_j))$
belongs to the span $F(v)$. If $A_1$ belongs to span of $F(v)$, we conclude, that the initial message does not pass through the vertex $v$.
\end{enumerate}

Let $u$ be the vertex of the graph that was the first vertex in the vertex numbering that was used in the construction of $\hat W$. We will show that $u$ satisfies conditions 1 an 2. 

Since $\hat W$ has zero limited noise, for every edge $e$ that is not incident to $u$, the first coordinate of the vector $\hat W(e)$ is zero. Therefore, the dimension of the span of the set $\hat F(u) = \{\hat W(e)|e \in
E, e$ is not incident to $v\}$ is less then $k$. Since $\hat W$ is obtained from $W$ by applying an invertible linear operator, $span(F(u))$ and $span(\hat F(u))$ have the same dimension. Change of numbering on edges of $G$ during the description of public key does not affect the dimension of $span(F(u))$, therefore 
$u$ satisfies condition 1, and we will find at least one vertex after step 1.

First coordinates of values of vector-function $\hat W$ on edges, that are incident to $u$, are not equal to zero and form a rapidly increasing sequence. Therefore, for every edge $e$ that is incident to $u$, $\hat W(e)$ does not belong to $span(\hat F(u))$, and for every two distinct edges $e_i$ and $e_j$ that are incident to $u$,  $\hat W(e_i) - \hat W(e_j)$ does not belong to $span(\hat F(u))$.

Now we will show that for every three distinct edges $e_i$, $e_j$, $e_k$ that are incident to $u$, $\hat W(e_i) - (\hat W(e_j)  - \hat W(e_k))$ does not belong to $span(\hat F(u))$. 
Let the first coordinate of $\hat W(e_i)$ be denoted as $\alpha$, let the first coordinate of  $\hat W(e_j)$ be denoted as  $\beta$, and let the first coordinate of  $\hat W(e_k)$ be denoted $\gamma$.
The first coordinate of $\hat W(e_i) - (\hat W(e_j)  - \hat W(e_k))$ is equal to $\alpha - (\beta -\gamma)  =  \alpha - \beta + \gamma $. Assume that  $ \alpha - \beta + \gamma \geq 0$. We will show that in this case the greatest of these three values is either  $\gamma$ or $\alpha$. Assume that this is not the case. Let $\beta$ be the greatest of these three values. Then $\beta \geq 3 \alpha$ and $\beta \geq 3\gamma$. Therefore, 
 $2\beta \geq 3(\gamma + \alpha)$, and so $\beta \geq 1.5(\gamma + \alpha) > \gamma + \alpha$, and  $  \alpha - \beta + \gamma < 0$, which is a contradiction to the original assumption. Next, without any loss in generality, let $\alpha$ be the greatest of these three values.
Since $  \alpha \geq  3\beta$, $\alpha - \beta + \gamma > \alpha - \beta  >0$, 
and since $p> 2\cdot \alpha$, and $\alpha \geq 3\gamma$, $\alpha - \beta + \gamma <  \alpha + \beta + \gamma < 2\alpha < p$. Therefore,
the first coordinate of $\hat W(e_i) - (\hat W(e_j)  - \hat W(e_k))$ is not equal to  $0\text{ (mod p)}$, and $\hat W(e_i) - (\hat W(e_j)  - \hat W(e_k))$ does not belong to $span(\hat F(u))$. 
The case, where $ \alpha - \beta + \gamma \leq 0$, is resolved in a similar way.

Now we will show, that for every four distinct edges $e_i$, $e_j$, $e_k$ and $e_t$ that are incident to $u$, $(\hat W(e_i) - \hat W(e_j))  - (\hat W(e_k) - \hat W(e_t))$ does not belong to $span(\hat F(u))$. 
Let the first coordinate of $\hat W(e_i)$ be denoted as $\alpha$,  let the first coordinate of $\hat W(e_j)$ be denoted as $\beta$, let the first coordinate of $\hat W(e_k)$ be denoted as $\gamma$, and let the first coordinate of $\hat W(e_t)$ be denoted as $\phi$.
 The first coordinate of $(\hat W(e_i) - \hat W(e_j))  - (\hat W(e_k) - \hat W(e_t))$ is equal to $(\alpha - \beta) -(\gamma -\phi)) =  \phi + \alpha - \beta - \gamma $. Assume that $ \phi + \alpha - \beta - \gamma \geq 0$. We will show that in this case the greatest of these four values is either $\phi$ or $\alpha$. Assume that this is not the case. Without any loss in generality, let $\beta$ be the greatest of these four values. Then $\beta \geq 3 \alpha$ and $\beta \geq 3\phi$. Therefore, 
 $2\beta \geq 3(\phi + \alpha)$, and so $\beta \geq 1.5(\phi + \alpha) > \phi + \alpha$, and $  \phi + \alpha - \beta - \gamma < 0$, which is a contradiction to the original assumption. Next, without any loss in generality, let $\phi$ be the greatest of these four values.
 Since $  \phi \geq  3\beta, \phi \geq 3 \gamma$,   $\phi + \alpha - \beta - \gamma > \phi - ( \beta + \gamma) >0$, 
and since $p> 2\cdot \phi$ and $\phi \geq 3\alpha$, $\phi +\alpha - \beta -\gamma < \phi + \alpha + \beta + \gamma \leq 2\phi < p$. Therefore,
the first coordinate of $(\hat W(e_i) - \hat W(e_j))  - (\hat W(e_k) - \hat W(e_t))$ is not equal to $0\text{ (mod p)}$, and $(\hat W(e_i) - \hat W(e_j))  - (\hat W(e_k) - \hat W(e_t))$ does not belong to $span(\hat F(u))$. 
The case, where $ \phi + \alpha - \beta - \gamma \leq 0$, is resolved in a similar way.

Because of the conditions on step 1 and 2, for every vertex $v$ that passed these steps, we will correctly recover a single set of edges that are incident to $v$ and belong to the initial message, as well as the signs of their weights in the computation of $A_1$, on step 3.

After the third step, we obtain $A_2$ from $A_1$ by subtracting the weights of the reconstructed edges with the reconstructed signs, and then remove from graph $G$ all vertices $v$ remaining after step 2, along with all their adjacent edges. The $i$-th iteration of the attack consists of the same steps as the first, but is applied to the graph remaining after the $(i-1)$-th iteration and the vector $A_i$, also remaining after the $(i-1)$-th iteration. Note that after step 2 of the $i$-th iteration, a vertex $u_i$ is guaranteed to remain, which, when defining the function $\hat W$, had the smallest index $t$ among the indices of the vertices remaining in the graph after the $(i-1)$-th iteration. For $u_i$ and each edge $e$ remaining after the $(i-1)$-th iteration, the $t$-th coordinate of $\hat W(e)$ is 0 if $e$ is not incident with $u_i$,
and is not 0 if $e$ is incident with $u_i$.

Thus, all the arguments for $u$ also apply to $u_i$, meaning that at each iteration we will delete at least one vertex. If at some point $A_i = (0,0,...,0)$,
then instead of the next iteration, we collect the original message from all the edges recovered during the algorithm's iterations.

Note that at the first iteration step, the dimension of the linear span $(m-1)(m-2)/2$ of vectors of length $m$ is calculated $m$ times, at the second iteration step, no more than $m-1 + (m-1)(m-2) + (m-1)(m-2)(m-3) + (m-1)(m-2)(m-3)(m-4)$ checks are performed to determine whether a vector of length $m$ belongs to the linear span $(m-1)(m-2)/2$ of vectors of length $m$,
and at the third iteration step, no more than $(m-1)(m-2) + 2m +1$ checks are performed to determine whether a vector of length $m$ belongs to the linear span $(m-1)(m-2)/2$ of vectors of length $m$. When moving from one iteration to the next,
at most three vectors of length $m$ are added together, and some graph vertices and their adjacent edges are removed. In the original graph of $m$ vertices, at least one vertex is removed after each iteration, so
the algorithm will have at most $m$ iterations. Thus, the algorithm runs in polynomial time in $m$.

End of the proof.
\end{proof}

Non-zero limited noise is a necessary condition for a defense against such an attack, but not a sufficient one. A sufficient condition is to have $dim(span(F(v))) = k$ for every vertex $v$.

Here is a draft of a possible defense against this attack. 
\begin{enumerate}
\item  Let $\hat w_{i,j,i}$ be elements of rapidly increading sequences $(a_{j,1}, ..., a_{j,m-j}),$ where $\forall i \; 3\cdot a_{j,i-1} \leq a_{j,i}$ and $a_{j,1} \geq 3.$ 
\item For $1\leq t <i <j \leq m$, let $\hat w_{i,j,t} = 1$.
\item To set values for  $\hat w_{i,j,t},  1\leq i< j \leq m, i <t$, we will choose $D = (e_{i_1} , ..., e_{i_k} )$ -- a cycle of length $k$.
\item We complete the definition of $\hat W$ on $D$, such that the dimension of $span(\{\hat W(e_{i_1}), ...,\hat W(e_{i_k} )\})$ is equal to $k$.
\item For every vertex $v$ we choose two edges, $e_1(v)$ and $e_2(v)$, that are not incident to $v$ and are not a part of $D$, and complete the definition of 
$\hat W(e_1(v))$  and $\hat W(e_2(v)),$ such that the dimension of the span of the set $\{\hat W(e_{i_1}), ...,\hat W(e_{i_k} ),\hat W(e_1(v)),\hat W(e_2(v))\} \setminus \{\hat W(e)|e \in
D, e$ is incident to $v\}$ is equal to $k$.
\end{enumerate}
After that we have at least $k(k -7)/2$ edges, whose weight $\hat W$ is not completely defined, with the remaining $\hat w_{i,j,t}$ being an arbitrary noise. There is around  $\Omega (k^3)$ of these undefined arbitrary noise elements $\hat w_{i,j,t}$ that we can define in any way we need.


\begin{thebibliography}{0}

\bibitem{endm}
{J. Edmonds.}
Maximum matching and a polyhedron with 0,1-vertices // Journal
of Research of the National Bureau of Standards Section B Mathematics and
Mathematical Physics. 1965. {\bf69B}, №1-2.125--130.


\bibitem{bol-irm}
{A. I. Bolotnikov, A. A. Irmatov.}
On the Matching Arrangement of a Graph, Improper Weight Function Problem and Its Application // Lobachevskii Journal of Mathematics. 2025. {\bf46}, №~3.1025--1039.



\end{thebibliography}
\end{document}